\date{October 29, 2016}
\def\C{{\mathbb C}}% complex numbers
\def\R{{\mathbb R}}% real numbers
\def\N{{\mathbb N}}% nonnegative integers
\def\le{\leqslant}% lessoreqal
\def\ge{\geqslant}%greaterorequal
\newcommand{\im}{\mathrm{Im}}
\theoremstyle{plain}
\newtheorem{theorem}{Theorem}[section]
\newtheorem{lemma}[theorem]{Lemma}
\newtheorem{proposition}[theorem]{Proposition}
\theoremstyle{definition}
\newtheorem{remark}[theorem]{Remark}
\newtheorem*{remark*}{Remark}
\def\RR{{\mathbb R}}% real numbers
\def\eps{\varepsilon}
\def\pa{\partial}
\def\calH{{\mathcal H}}
\def\bx{{\mathbf x}}
\numberwithin{equation}{section}
\begin{document}

\title[Strong magnetic confinement in NLS]
{Averaging of nonlinear Schr\"odinger equations with strong magnetic confinement}

\author[R. Frank]{Rupert L. Frank}
\address[R. Frank]{Mathematisches Institut, Ludwig-Maximilans Universit\"at M\"unchen, Theresienstr. 39, 80333 M\"unchen, Germany, and Department of Mathematics, California Institute of Technology, 
Pasadena, CA 91125, USA}
\email{rlfrank@caltech.edu}

\author[F. M\'ehats]{Florian M\'ehats}
\address[F. M\'ehats]{IRMAR, Universit\'e de Rennes 1 and INRIA, IPSO Project, Campus de Beaulieu, 
35042 Rennes Cedex, France}
\email{florian.mehats@univ-rennes1.fr}

\author[C. Sparber]{Christof Sparber}
\address[C.~Sparber]
{Department of Mathematics, Statistics, and Computer Science, M/C 249, University of Illinois at Chicago, 851 S. Morgan Street, Chicago, IL 60607, USA}
\email{sparber@math.uic.edu}

\begin{abstract}
We consider the dynamics of nonlinear Schr\"odinger equations with strong constant magnetic fields. 
In an asymptotic scaling limit the system exhibits a purely magnetic confinement, based on the 
spectral properties of the Landau Hamiltonian. Using an averaging technique we derive 
an associated effective description via an averaged model of nonlinear Schr\"odinger type. In a special case this 
also yields a derivation of the LLL equation.
\end{abstract}

\subjclass[2000]{35Q55, 35B25}
\keywords{Nonlinear Schr\"odinger equation, magnetic confinement, Landau levels, averaging}

\thanks{This publication is based on work supported by the U.S. National Science Foundation through grants DMS-1363432 (R.L.F.) and DMS-1348092 (C.S.) and by the ANR project Moonrise ANR-14-CE23-0007-01 (F.M.).}

\maketitle

\section{Introduction}
\label{sec:intro}

In this work, we study the asymptotic scaling limit, as $\eps \to 0_+$ of nonlinear Schr\"odinger equations (NLS) with strong magnetic fields. 
Such equations arise, e.g., in the macroscopic description of fermion pairs in the state of Bose-Einstein condensation, cf. \cite{HS}.
To be more precise, we consider the following 
NLS-type model (in dimensionless units) in three spatial dimensions:
\begin{equation}
\label{NLSinit}
i\pa_t \psi=\frac{1}{2}\big( - i \nabla_\bx  + A^\eps(\bx) \big)^2 \psi + V(z) \psi + \beta^\eps |\psi|^{2\sigma}\psi,
\end{equation}
where $(t,\bx)\in \R\times \R^3$, $\sigma \in \N$, and $\beta^\eps \in \R$, some nonlinear coupling constant (to be made precise later on). We will denote the spatial degrees of freedom by 
$\bx = (x_1, x_2, z)\in \R^3$ and also write $x=(x_1,x_2)$, for simplicity. The real-valued potential $V$ is assumed to be smooth and sub-quadratic, i.e., 
for $\alpha\ge 2$: 
\begin{equation} \label{Vhyp} |\partial^\alpha V (z)| \le C_\alpha, \quad \text{for all $z\in \R$.}
\end{equation}
A possible example would be a harmonic oscillator potential in $z$-direction. The vector potential $A^\eps$ is assumed to be given by
\[
A^\eps(\bx)= \frac{1}{2\eps^2} (-x_2, x_1, 0), 
\]
where $0< \eps \ll1$ is a small (adiabatic) parameter. This implies that $A^\eps$ is divergence free $\nabla_\bx \cdot A^\eps =0$, and hence
\[
\big( - i \nabla_\bx + A^\eps(\bx) \big)^2 = - \Delta_{\bf x} + \frac{1}{4 \eps^4} |x|^2 {-}  \frac{i}{\eps^2} (x_1 \partial_{x_2} - x_2 \partial_{x_1}).
\]
The corresponding magnetic field is given by 
\[
B^\eps = \nabla \times A^\eps = \frac{1}{\eps^2}(0,0,1)\in \R^3, 
\]
i.e., a constant magnetic field in the $z$-direction with field strength $|B^\eps| =\frac{1}{\eps^2}\gg 1$. 

We want to analyze the strong magnetic confinement limit as $\eps \to 0_+$, assuming 
that the initial data for \eqref{NLSinit} is of the form \[\psi(0,\bx)={\eps}^{-1} \psi_0\left(\frac{x}{\eps},z\right), \quad \text{with $\| \psi_0 \|_{L^2}=1$.}\] In other words, we assume that the initial wave function is 
already confined at the scale epsilon in the $x$-directions (an assumption which is consistent with the asymptotic limiting regime considered). To this end, we rescale 
\[
x'=\frac{x}{\eps}, \quad z'=z, \quad \psi^\eps(t,x',z')=\eps \psi \left(t,\eps x', z' \right),
\]
ensuring that $\| \psi^\eps\|_{L^2} = \|\psi \|_{L^2}=1$. Moreover, we shall assume that the nonlinear coupling constant is of the form 
$\beta^\eps=\lambda \eps^{2\sigma}\ll 1$, where $\lambda\in \RR$ is fixed. We are thus in a {\it weak interaction regime}.

In the rescaled variables, equation \eqref{NLSinit}
becomes (dropping the primes in the variables for simplicity)
\begin{equation}\label{NLS}
i\pa_t \psi^\eps=\frac{1}{\eps^2} \mathcal H \psi^\eps -\frac{1}{2} \partial^2_z \psi^\eps + V(z) \psi^\eps + \lambda |\psi^\eps|^{2\sigma}\psi^\eps,\quad \psi^\eps|_{t=0}=\psi_0(x,z),
\end{equation}
where, using $x^\perp = (-x_2, x_1)$, we denote 
\[
\mathcal H = \frac{1}{2} \left( - i \nabla_x  + \frac{1}{2} x^\perp \right)^2= -\frac{1}{2} \Delta_x + \frac{1}{8}|x|^2 {-}   \frac{i}{2} x^\perp \cdot \nabla_x, 
\]
i.e., the classical Landau Hamiltonian in symmetric gauge, cf. \cite{LL}. Note that $\mathcal H$ commutes with the rest of the linear Hamiltonian, i.e., 
$$[\mathcal H, \partial_z^2]=[\calH, V]=0.$$
We comment on the case where $V$ depends not only on $z$ briefly in Remark \ref{extension} below.

Let us recall some well-known facts about the spectral properties of $\mathcal H$. 
One finds that $\mathcal H$ is essentially self-adjoint on $C_0^\infty (\R^2)\subset L^2(\R^2)$ with pure point spectrum given by
\begin{equation}\label{spec}
\text{spec} \, \mathcal H = \Big\{n+\frac{1}{2}\Big\}, \quad n\in \N_0. 
\end{equation}
These are the same eigenvalues as for a one-dimensional harmonic oscillator. Each $n\in \N_0$ thereby corresponds to a distinct Landau level. 
In contrast to the harmonic oscillator, however, the corresponding eigenspaces $P_n L^2(\R^2)$, are {\it infinitely degenerate}. Here, and in the following, 
we denote by $P_n=P_n^2$ the spectral projection in $L^2(\R^2)$ onto the $n$-th eigenspace of $\mathcal H$.

\smallskip

In view of \eqref{NLS}, it is clear that $\mathcal H$ induces high frequency oscillations (in time) $\propto \mathcal O(\eps^{-2})$ within the solution $\psi^\eps$. 
By filtering these oscillations, we consequently expect the following limit in a strong norm,
\[
\phi^\eps(t,\bx) := e^{i t \mathcal H/\eps^2} \psi^\eps(t,\bx)\stackrel{\eps\rightarrow 0_+ }{\longrightarrow}\phi(t,\bx).
\]
In order to describe the behavior of the limit $\phi$, a natural functional framework is given by the space,
$$\Sigma^2:=\left\{u\in H^2(\RR^3)\,:\,|\bx|^2u\in L^2(\RR^3)\right\},$$
which is equipped with the norm
\begin{equation}\label{norm}
\|u\|_{\Sigma^2}:= \left( \|u\|_{H^2}^2+\||\bx|^2 u\|_{L^2}^2\right)^{1/2}.
\end{equation}
It will be useful for us that $\Sigma^2$ is a Banach algebra. This space and its generalizations are commonly used in the existence theory of NLS with magnetic potentials, cf., for instance, \cite{Fu, NaSh, Ya}.

Next, we introduce the following nonlinear function,
\begin{equation}\label{eq:F}
\begin{aligned}
F(\theta,u)&:=e^{i\theta \calH}\left(\left|e^{-i\theta \calH}u\right|^{2\sigma}e^{-i\theta \calH}u\right)\\
&=e^{i\theta (\calH-1/2)}\left(\left|e^{-i\theta (\calH-1/2)}u\right|^{2\sigma}e^{-i\theta (\calH-1/2)}u\right),
\end{aligned}
\end{equation}
and study the behavior of $F\left({t}/{\eps^2}, u\right)$, as $\eps \to 0$, where it is readily seen that $F\in C(\RR\times\Sigma^2,\Sigma^2)$. Moreover, in view of \eqref{spec} 
the operator $e^{i\theta (\calH-1/2)}$ is $2\pi$-periodic with respect to $\theta$, hence $F$ is also $2\pi$-periodic with respect to $\theta$. 
Denoting the average of this function by
\begin{equation}\label{eq:average}
\begin{split}
F_{\rm av}(u):= &\, \lim_{T \to \infty} \frac{1}{T} \int_0^T F\left(\theta, u\right) d\theta\\
= &\, \frac{1}{2\pi}\int_0^{2\pi}e^{i\theta \calH}\left(\left|e^{-i\theta \calH}u\right|^{2\sigma}e^{-i\theta \calH}u\right)d\theta,
\end{split}
\end{equation}
the limiting model as $\eps\to 0$ formally reads:
\begin{equation}\label{eqphi}
i\pa_t \phi=-\frac{1}{2} \partial^2_z \phi +  V(z) \phi+ \lambda F_{\rm av}(\phi),
\end{equation}
subject to initial data $\phi(0, \bx)=\psi_0(x,z)$. Here we have used the fact that $\mathcal H$ commutes with the rest of the linear Hamiltonian. 
Equation \eqref{eqphi} describes the resulting averaged particle dynamics. Note that \eqref{eqphi} is still a model in three spatial dimensions. 
In particular, the Gross-Pitaevskii energy associated to \eqref{eqphi} is
\begin{align*}
E(\phi)=&\ \frac{1}{2}\int_{\RR^3}|\partial_z\phi|^2 \,dx \,dz+\int_{\RR^3}V(z) |\phi|^2\, dx\, dz\\
&\ +\frac{\lambda}{2\pi(\sigma+1)}\int_{\RR^3} \int_0^{2\pi}\left|e^{-i\theta \calH}\phi\right|^{2\sigma+2}d\theta \, dx\, dz.
\end{align*}
However, the dependence of the solution to \eqref{eqphi} on $x=(x_1, x_2)$ only stems from the nonlinear averaging operator $F_{\rm av}(\phi)$. 
Thus, in the linear case $\lambda =0$, \eqref{eqphi} becomes a true one-dimensional equation along the {\it unconfined} $z$-axis.

\smallskip

With these notations at hand, we can now state the main result of this work:

\begin{theorem}\label{thmmain}
Let $V$ satisfy \eqref{Vhyp}, $\sigma \in \N$ and $\psi_0 \in \Sigma^2$.

\noindent {\rm (i)} There is a $T_{\rm max}\in (0,\infty]$ and a unique maximal solution $\phi \in C([0,T_{\rm max}),\Sigma^2)\cap C^1([0,T_{\rm max}),L^2(\R^3))$ of the limiting equation \eqref{eqphi}, such that
\[
\| \phi(t, \cdot)\|_{L^2} = \| \psi_0 \|_{L^2}, \quad E(\phi(t, \cdot)) = E(\psi_0), \quad \forall \, t\in [0, T_{\rm max}).
\]
{\rm (ii)} For all $T\in(0,T_{\rm max})$ there are $\eps_T>0$, $C_T>0$ such that, for all $\eps\in(0,\eps_T]$, equation \eqref{NLS} 
admits a unique solution $\psi^\eps\in C([0,T],\Sigma^2)\cap C^1([0,T],L^2(\R^3))$, which is uniformly bounded with respect to $\eps\in (0,\eps_T]$ in $L^\infty((0,T),\Sigma^2)$ and satisfies 
$$\max_{t\in [0,T]}\left\|\psi^\eps(t, \cdot)-e^{-it \calH/\eps^2}\phi(t, \cdot)\right\|_{L^2}\le C_T\,\eps^2.$$
{\rm (iii)} If, in addition, the initial data is concentrated in the $n$-th Landau level $\psi_0 = P_n \psi_0$, then for all $t\in [0, T_{\rm max})$ it holds
$\phi(t) = P_n \phi(t)$ and
\[
i\pa_t \phi=-\frac{1}{2} \partial^2_z \phi +  V(z) \phi+ \lambda P_n \left(|\phi|^{2\sigma} \phi \right).
\]
\end{theorem}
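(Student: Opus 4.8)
The plan is to prove parts (i) and (ii) together by a fixed-point/averaging argument in $\Sigma^2$, and then deduce (iii) as a short corollary using the spectral structure of $\calH$. For (i), I would first observe that $F_{\rm av}:\Sigma^2\to\Sigma^2$ is locally Lipschitz: this follows because $\Sigma^2$ is a Banach algebra, because $e^{\pm i\theta\calH}$ acts as a bounded ($\theta$-uniformly) group on $\Sigma^2$ (it conjugates $-\Delta_x+|x|^2$ and $\calH$ in a controlled way, the Landau Hamiltonian being a quadratic form in $x$ and $\nabla_x$), and because the integrand $F(\theta,u)$ is then locally Lipschitz in $u$ uniformly in $\theta\in[0,2\pi]$. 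Combined with the fact that $-\frac12\partial_z^2+V(z)$ generates a $C^0$-group on $\Sigma^2$ (here one uses the sub-quadratic bound \eqref{Vhyp}), Duhamel's formula and the contraction mapping principle give a unique maximal solution $\phi\in C([0,T_{\rm max}),\Sigma^2)\cap C^1([0,T_{\rm max}),L^2)$. Conservation of $\|\phi\|_{L^2}$ comes from testing \eqref{eqphi} against $\phi$ and noting $\IM\langle F_{\rm av}(\phi),\phi\rangle=0$ (each $F(\theta,\phi)$ satisfies this by unitarity of $e^{-i\theta\calH}$ and the gauge invariance of $|w|^{2\sigma}w$); conservation of $E$ follows by differentiating $E(\phi(t))$ and recognizing $\lambda F_{\rm av}(\phi)$ as the $L^2$-gradient of the nonlinear part of $E$.

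For (ii), I would filter the fast oscillations by setting $\phi^\eps(t):=e^{it\calH/\eps^2}\psi^\eps(t)$. Since $[\calH,\partial_z^2]=[\calH,V]=0$, the equation \eqref{NLS} transforms exactly into
\[
i\pa_t\phi^\eps=-\tfrac12\pa_z^2\phi^\eps+V(z)\phi^\eps+\lambda\,F\!\left(\tfrac{t}{\eps^2},\phi^\eps\right),
\]
i.e.\ the same equation as \eqref{eqphi} but with the averaged nonlinearity $F_{\rm av}(\phi)$ replaced by the fast-oscillating $F(t/\eps^2,\phi^\eps)$. This is now a classical averaging problem: one writes $F(\theta,u)=F_{\rm av}(u)+\pa_\theta G(\theta,u)$ where $G(\theta,u):=\int_0^\theta\big(F(s,u)-F_{\rm av}(u)\big)\,ds$ is $2\pi$-periodic, bounded, and locally Lipschitz in $u$ with values in $\Sigma^2$ (this uses the same group estimates as above, and that $F-F_{\rm av}$ has zero average). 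Introducing the near-identity change of unknown $\phi^\eps=\chi^\eps-i\lambda\eps^2 G(t/\eps^2,\chi^\eps)$ removes the oscillatory term at leading order: $\chi^\eps$ solves \eqref{eqphi} up to an $\calO(\eps^2)$ remainder in $L^2$ (with $\Sigma^2$-bounded coefficients, since $\pa_u G$, $\pa_\theta G$ and the nonlinearity differences are all controlled on bounded sets of $\Sigma^2$). A Gronwall estimate on $[0,T]$ then yields local existence of $\psi^\eps$ for $\eps\le\eps_T$, the uniform bound in $L^\infty((0,T),\Sigma^2)$, and $\|\psi^\eps(t)-e^{-it\calH/\eps^2}\phi(t)\|_{L^2}=\|\phi^\eps(t)-\phi(t)\|_{L^2}\le C_T\eps^2$. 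The main obstacle here is the a priori $\Sigma^2$-bound: one must close a Gronwall argument at the $\Sigma^2$ level (not just $L^2$), which requires commuting the weight $|\bx|^2$ and two derivatives past the nonlinearity and past $e^{\pm i\theta\calH}$; the Banach algebra property of $\Sigma^2$ and the explicit quadratic form of $\calH$ are what make these commutators manageable, uniformly in $\theta$.

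For (iii), I would use that $P_n$ commutes with $\calH$, with $\pa_z^2$ and with multiplication by $V(z)$. First, on the $n$-th Landau level $e^{-i\theta\calH}$ acts as the scalar $e^{-i\theta(n+1/2)}$, so for $u=P_nu$ one computes directly from \eqref{eq:F}–\eqref{eq:average} that
\[
F_{\rm av}(P_n u)=\frac{1}{2\pi}\int_0^{2\pi}e^{i\theta(n+1/2)}\Big(\big|e^{-i\theta(n+1/2)}u\big|^{2\sigma}e^{-i\theta(n+1/2)}u\Big)\,d\theta = P_n\big(|u|^{2\sigma}u\big)?
\]
— more precisely, the phase factors cancel since $|e^{-i\theta(n+1/2)}u|^{2\sigma}=|u|^{2\sigma}$ and $e^{i\theta(n+1/2)}e^{-i\theta(n+1/2)}=1$, giving $F_{\rm av}(P_nu)=\frac{1}{2\pi}\int_0^{2\pi}|u|^{2\sigma}u\,d\theta=|u|^{2\sigma}u$; but $F_{\rm av}(v)\in P_nL^2$ is not automatic, so one instead notes that $F_{\rm av}$ always maps $P_nL^2$ into itself (a consequence of the averaging: $e^{i\theta\calH}$ applied to anything, then averaged, lands in the span of the eigenspaces, and the $P_n$-component is exactly the stated integral), whence $F_{\rm av}(P_nu)=P_n(|u|^{2\sigma}u)$ after the phase cancellation. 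Consequently, if $\psi_0=P_n\psi_0$ then $P_n\phi$ solves the same equation \eqref{eqphi} with the same initial data; by the uniqueness in part (i), $\phi(t)=P_n\phi(t)$ for all $t\in[0,T_{\rm max})$, and plugging this back into \eqref{eqphi} together with $F_{\rm av}(P_n\phi)=P_n(|\phi|^{2\sigma}\phi)$ gives exactly the stated reduced equation. I expect this part to be essentially bookkeeping once the mapping property $F_{\rm av}(P_nL^2\cap\Sigma^2)\subset P_nL^2\cap\Sigma^2$ is recorded.
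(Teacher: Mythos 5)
Your part (i) and the averaging mechanism in part (ii) (the corrector $G=\mathcal F$ built from the zero-mean periodic function $F-F_{\rm av}$; you use it as a near-identity change of unknown, the paper integrates by parts in the Duhamel integral — equivalent devices) are in the right spirit, but there are two genuine gaps. The most serious one is in (iii). Your phase-cancellation computation is wrong as first written: for $u=P_nu$ one has $F(\theta,u)=e^{i\theta(\calH-n-1/2)}\left(|u|^{2\sigma}u\right)$, and since $|u|^{2\sigma}u$ does \emph{not} lie in $P_nL^2$, the outer $e^{i\theta\calH}$ cannot be replaced by the scalar $e^{i\theta(n+1/2)}$; averaging in $\theta$ kills every $P_m$-component with $m\neq n$ and yields $F_{\rm av}(P_nu)=P_n\left(|u|^{2\sigma}u\right)$, not $|u|^{2\sigma}u$. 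More importantly, the deduction ``hence $P_n\phi$ solves \eqref{eqphi} with the same data, so by uniqueness $\phi=P_n\phi$'' is circular: applying $P_n$ to \eqref{eqphi} produces the term $P_nF_{\rm av}(\phi)$, and there is no reason that $P_nF_{\rm av}(\phi)=F_{\rm av}(P_n\phi)$ unless $\phi$ already lies in the $n$-th level, which is exactly what is to be proved. Two correct routes: (a) the paper's: set $w=P_n^\bot\phi$, note $\int_0^{2\pi}P_n^\bot F(\theta,P_n\phi)\,d\theta=0$ by the identity above, estimate the Lipschitz difference $F(\theta,P_n\phi+w)-F(\theta,P_n\phi)$ in $L^2$ (using that $e^{-i\theta\calH}$ preserves the $\Sigma^2$-type norm, hence $L^\infty$ bounds), and conclude $w\equiv0$ by Gronwall; or (b) a repaired version of your idea: solve $i\pa_t\tilde\phi=\calH_z\tilde\phi+\lambda P_n(|\tilde\phi|^{2\sigma}\tilde\phi)$ by a fixed point in the closed subspace $P_n\Sigma^2$ (invariant under $e^{-it\calH_z}$ since $[P_n,\calH_z]=0$), observe that this $\tilde\phi$ solves \eqref{eqphi} by $F_{\rm av}(P_nu)=P_n(|u|^{2\sigma}u)$, and invoke uniqueness from (i) together with a continuation argument to cover all of $[0,T_{\rm max})$.

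In (ii) the gap is in how you propose to close the estimates. You cannot ``close a Gronwall argument at the $\Sigma^2$ level'' for the $\mathcal O(\eps^2)$ error with only $\psi_0\in\Sigma^2$: the averaging step produces the terms $\calH_z\mathcal F(t/\eps^2,\phi)$ and $D_u\mathcal F(t/\eps^2,\phi)\cdot\pa_t\phi$, which are controlled in $L^2$ by $\|\phi\|_{\Sigma^2}$ and $\|\pa_t\phi\|_{L^2}$, but controlling them in $\Sigma^2$ would require roughly $\Sigma^4$ regularity of $\phi$, which is not available. The paper therefore runs a two-tier argument: the error $u^\eps=\phi^\eps-\phi$ is estimated only in $L^2$; the uniform $\Sigma^2$ bound on $\psi^\eps$ is obtained separately from the Moser-type estimate $\||f|^{2\sigma}f\|_{\Sigma^2}'\lesssim\|f\|_{L^\infty}^{2\sigma}\|f\|_{\Sigma^2}'$, valid as long as $\|\psi^\eps\|_{L^\infty}\le 2M$, and a bootstrap on the stopping time $T^\eps$ at which this threshold would be attained; the $\mathcal O(\eps^2)$ $L^2$ error is upgraded by Gagliardo--Nirenberg interpolation (against the $\Sigma^2$ bounds) to an $\mathcal O(\eps^{1/2})$ $L^\infty$ error, showing the threshold is never reached for $\eps\le\eps_T$, hence $T^\eps\ge T$. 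Without this structure, a single Gronwall inequality in $\Sigma^2$ is nonlinear (the constant involves $\|\psi^\eps\|_{L^\infty}^{2\sigma}$) and only yields an $\eps$-uniform but possibly short time, not an arbitrary $T<T_{\rm max}$. Finally, existence and uniqueness of $\psi^\eps$ itself is part of the statement and must be supplied: the paper quotes the magnetic-NLS well-posedness of Nakamura--Shimomura (built on Fujiwara/Yajima), and the $\eps$-uniform propagator bound rests on the fact that $e^{-it\calH}$ is an isometry for the equivalent norm built from $\calH_0=-\Delta_x+\tfrac14|x|^2$, $\pa_z^2$ and $z^2$ (all commuting with $\calH$), plus an exponential bound for $e^{-it\calH_z}$ — facts you assert but which need these short proofs.
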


This theorem is in the same spirit as earlier results for NLS with strong anisotropic electric confinement potentials, cf. \cite{delebecque2, bcm, ben2005nonlinear, MeSp}. Similarly, in \cite{delebecque} 
the authors study a Schr\"odinger type model including strong magnetic fields combined with a strong electric confinement. In the present work, however, the 
confinement is {\it solely} due to the {\it magnetic} vector potential $A^\eps$, a situation, which, to the best of our knowledge, has not been studied before in the case of nonlinear Schr\"odinger equations. (For linear 
Schr\"odinger equations, related questions have been considered in the context of the Aharanov-Bohm effect, see, e.g., \cite{IMS}). 
The main qualitative difference between electric and magnetic confinement seems to be that in the latter case, 
the resulting limiting equation \eqref{eqphi} {\it always} remains a model in three spatial dimensions (even after projecting onto the $n$-th Landau level). In particular, it seems 
futile to use an expansion in terms of eigenfunctions of the confining Hamiltonian $\mathcal H$, as is done in earlier works, cf. \cite{delebecque2, ben2005nonlinear}, since in the present situation 
this would result in a system of {\it infinitely many} coupled NLS.

We note that, instead of \eqref{NLSinit}, one might want to consider the analogous equation in only two spatial dimensions, i.e.,
\begin{equation*}
\label{NLSinit2}
i\pa_t \psi=\frac{1}{2}\big( - i \nabla_x  + A^\eps(x) \big)^2 \psi +  \lambda \eps^{2\sigma} |\psi|^{2\sigma}\psi, \quad \psi|_{t=0}=\psi_0(x),
\end{equation*}
where $x= (x_1, x_2)\in \R^2$. If the associated initial data satisfies $\psi_0 = P_n \psi_0(x)$, then the same type of analysis yields the following limiting model:
\[
i\pa_t \phi= \lambda P_n \left(|\phi|^{2\sigma} \phi \right), \quad \psi|_{t=0}=\psi_0(x).
\]
The latter is a generalization of the {\it lowest Landau level equation} (LLL), which is obtained for $\sigma = 1$ and $n=0$. In this case, one usually denotes $\zeta = x_1 +i x_2\in \C$ 
and $P_0$ becomes the orthogonal projector in $L^2(\C)$ on the {\it Bargmann-Fock space} 
\[
\mathcal E = \big\{e^{-|\zeta|^2/2} f(\zeta) \, \text{where $f$ is entire} \big \} \cap L^2(\C).
\]
The LLL equation has been extensively studied, see \cite{aftalion-blanc-nier, Ni}, and, more recently, \cite{GeTh}.

\begin{remark}\label{extension}
It is also possible to generalize our results to include potentials which are of the form $V=V\left(\frac{x}{\eps},z\right)$. In the rescaled variables, the analog of 
\eqref{NLS} then reads
\begin{equation*}\label{NLS2}
i\pa_t \psi^\eps=\frac{1}{\eps^2} \mathcal H \psi^\eps -\frac{1}{2} \partial^2_z \psi^\eps + V(x,z) \psi + \lambda |\psi|^{2\sigma}\psi,\quad \psi|_{t=0}=\psi_0(x,z).
\end{equation*}
By replacing the nonlinear function $F$ with
\begin{equation*}\label{eq:Ftilde}
\widetilde F(\theta,u)=e^{i\theta \calH}\left(\left(V(x,z)+\lambda\left|e^{-i\theta \calH}u\right|^{2\sigma}\right)e^{-i\theta \calH}u\right),
\end{equation*}
our Theorem \ref{thmmain} can be generalized in a straightforward way to obtain the following limit model
\begin{equation*}\label{eqphitilde}
i\pa_t \phi=-\frac{1}{2} \partial^2_z \phi +  \frac{1}{2\pi} \int_0^{2\pi} \widetilde F\left(\theta, \phi\right) d\theta.
\end{equation*}
It is worth noting that 
\[
\frac{1}{2\pi} \int_0^{2\pi} \tilde F(\theta,u) d\theta = \sum_{n\in \N_0} P_n V P_n + \lambda F_{\rm av}(u),
\]
so the first term on the right hand side is diagonal in the Landau levels and thus assertion (iii) of our main theorem remains valid also in this case.
\end{remark}

This paper is now organized as follows. In Section \ref{sec:LWP} below, we shall derive the necessary well-posedness theory for both \eqref{NLS} and \eqref{eqphi}. The 
averaging procedure, which yields assertion (ii) of our main theorem is given Section \ref{sec:aver}. Finally, we discuss the case of the dynamics within 
a given Landau level in Section \ref{sec:land}.

\section{Well-posedness results}\label{sec:LWP}

Before going into the details of the proof of the main theorem, we first need to provide a suitable (local in-time) well-posedness theory for \eqref{NLS} as well as \eqref{eqphi}. To this end, we 
recall the well-posedness results for magnetic NLS proved in \cite{NaSh}, which themselves rely on a construction of the fundamental solution 
of the associated linear Schr\"odinger group by \cite{Ya} and \cite{Fu, Fu1}. Indeed, in view of our assumptions on $A$ and $V$ it is easily seen that \eqref{NLS} 
falls within the class of models studied in \cite{NaSh}, i.e., NLS with smooth, sub-quadratic electric potentials $V$ and magnetic potentials $A\in C^\infty(\R^3;\R^3)$ satisfying:
\[
|\partial^\alpha A(\bx) |\le C_\alpha, \quad \forall |\alpha|\ge 1,
\]
and such that $B_{jk}=\partial_j A_k- \partial_k A_j$ fulfills
\[
|\partial^\alpha B_{jk}(\bx)|\le C_\alpha \langle x \rangle^{-1-\delta}, \quad \forall |\alpha|\ge 1,
\]
for some $\delta >0$. Under this assumptions we have the following local well-posedness result.

\begin{proposition}\label{prop:existNLS}
Let $A$ and $V$ be as above. Then for any $\psi_0 \in \Sigma^2$ there is a maximal existence time $T_1^\eps \in (0,+\infty]$ such that \eqref{NLS} has a unique solution
\[
\psi^\eps\in C([0,T_1^\eps); \Sigma^2) \cap C^1((0,T_1^\eps); L^2(\R^3)),
\]
depending continuously on the initial data $\psi_0$. \end{proposition}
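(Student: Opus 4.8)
The plan is to deduce this from the magnetic‑NLS well‑posedness theory of \cite{NaSh}, which rests in turn on the construction of the fundamental solution of the linear magnetic propagator in \cite{Ya, Fu, Fu1}. First I would check that \eqref{NLS} falls into the admissible class: the magnetic potential in play (here $\frac{1}{2\eps^2}x^\perp$, extended trivially in the $z$-direction) is affine in $\bx$, so all its derivatives of order $\ge 1$ are bounded and the associated magnetic field is constant; hence $\partial^\alpha B\equiv 0$ for $|\alpha|\ge 1$ and the decay requirement $|\partial^\alpha B_{jk}|\le C_\alpha\langle x\rangle^{-1-\delta}$ holds trivially, while $V$ is smooth and sub‑quadratic by \eqref{Vhyp}. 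The analytic input I would then invoke is that, under these conditions, the linear propagator $U^\eps(t):=\exp\big(-it\big(\frac{1}{\eps^2}\mathcal{H}-\frac12\partial_z^2+V(z)\big)\big)$ maps $\Sigma^2$ into itself, with operator norm on $\Sigma^2$ bounded uniformly for $t$ in any compact interval; this is obtained in \cite{NaSh} from the oscillatory‑integral representation of the kernel together with commutator estimates between the linear Hamiltonian and the weight $\langle\bx\rangle^2$ and the derivatives $\partial_\bx^\alpha$, $|\alpha|\le 2$.

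Granting this, I would recast \eqref{NLS} in Duhamel form,
\[
\psi^\eps(t)=U^\eps(t)\psi_0-i\lambda\int_0^t U^\eps(t-s)\big(|\psi^\eps(s)|^{2\sigma}\psi^\eps(s)\big)\,ds,
\]
and run a contraction argument in $C([0,T];\Sigma^2)$. The crucial point is that $\Sigma^2$ is a Banach algebra, as recalled in the Introduction; since $\sigma\in\N$, the nonlinearity $u\mapsto|u|^{2\sigma}u=u^{\sigma+1}\bar u^{\sigma}$ is a monomial in $u$ and $\bar u$, hence locally Lipschitz from $\Sigma^2$ to $\Sigma^2$, with $\||u|^{2\sigma}u-|v|^{2\sigma}v\|_{\Sigma^2}\le C(R)\|u-v\|_{\Sigma^2}$ whenever $\|u\|_{\Sigma^2},\|v\|_{\Sigma^2}\le R$. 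Combined with the uniform bound on $\|U^\eps(t-s)\|_{\Sigma^2\to\Sigma^2}$ for $s\in[0,T]$, choosing $T=T(\|\psi_0\|_{\Sigma^2})$ small makes the Duhamel map a contraction on a ball in $C([0,T];\Sigma^2)$, which yields a unique local solution. The standard continuation procedure then produces the maximal time $T_1^\eps$ together with the blow‑up alternative $\|\psi^\eps(t)\|_{\Sigma^2}\to\infty$ as $t\uparrow T_1^\eps$ if $T_1^\eps<\infty$, and continuous dependence on $\psi_0$ follows by applying the same Lipschitz estimates to the difference of two solutions and invoking Gronwall's inequality on compact subintervals of $[0,T_1^\eps)$.

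For the regularity $\psi^\eps\in C^1((0,T_1^\eps);L^2)$, I would note that for $u\in\Sigma^2$ each of $\mathcal{H}u$, $\partial_z^2u$ and $V(z)u$ lies in $L^2(\R^3)$ — the first two by standard mapping properties of $\Sigma^2$ (it controls two derivatives, the weight $|\bx|^2u$, and, by interpolation, the intermediate terms such as $|x|\nabla_x u$), the last because \eqref{Vhyp} forces $|V(z)|\lesssim\langle z\rangle^2$ — while the nonlinear term already lies in $\Sigma^2\subset L^2$. Since $\psi^\eps\in C([0,T_1^\eps);\Sigma^2)$, the right‑hand side of \eqref{NLS} is thus a continuous $L^2$-valued function of $t$, so $\partial_t\psi^\eps$ exists in $L^2$ and is continuous. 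The step I expect to be the real obstacle is purely the linear one: establishing that $U^\eps(t)$ preserves $\Sigma^2$ with norm bounded uniformly on compact time intervals. Without the magnetic term one could argue with the explicit Mehler‑type kernel of a sub‑quadratic Hamiltonian, but with the magnetic term present one must appeal to the Fujiwara–Yajima parametrix for the magnetic Schrödinger propagator; once this is in hand, the nonlinear contraction argument is entirely routine thanks to the Banach‑algebra structure of $\Sigma^2$.
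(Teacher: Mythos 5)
Your argument is correct, but it is not quite the paper's route. The paper's ``proof'' of Proposition \ref{prop:existNLS} consists of the hypothesis-checking you do in your first paragraph and then a direct citation of \cite[Theorem 1]{NaSh}: there the nonlinear problem is solved using local-in-time Strichartz estimates for the magnetic propagator, which come from the Fujiwara--Yajima construction of the fundamental solution, and this machinery is what allows \cite{NaSh} to treat more general (non-algebraic, lower-regularity) nonlinearities. You instead bypass Strichartz entirely and run a contraction in $C([0,T];\Sigma^2)$, using that $\sigma\in\N$ makes $u\mapsto|u|^{2\sigma}u$ a polynomial in $u,\bar u$ and that $\Sigma^2$ is a Banach algebra; the only nontrivial input you need is boundedness of $U^\eps(t)$ on $\Sigma^2$, locally uniformly in $t$. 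That is a perfectly valid and in fact more self-contained scheme --- it is exactly the strategy the paper itself uses for the limit equation in Lemma \ref{lem:existeff}. Two remarks. First, the step you single out as the ``real obstacle'' does not require the parametrix: the paper proves the needed bound $\|e^{-it(\eps^{-2}\calH+\calH_z)}f\|_{\Sigma^2}'\le e^{C|t|}\|f\|_{\Sigma^2}'$ (even uniformly in $\eps$) by elementary means in Lemmas \ref{bounded} and \ref{boundedhz}, via the commutation of $\calH$ with $1,\calH_0,\partial_z^2,z^2$ and a commutator--Gronwall argument for $\calH_z$; you could simply quote or reproduce those lemmas instead of invoking the oscillatory-integral representation. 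Second, in your last paragraph the passage from ``the right-hand side is continuous with values in $L^2$'' to $\psi^\eps\in C^1((0,T_1^\eps);L^2)$ is a slight gloss: one needs the standard semigroup fact (e.g.\ \cite[Chapter 6.1]{Pa}) that a mild solution lying in $C([0,T];\Sigma^2)$, with $\Sigma^2$ continuously embedded in the domain of the generator and the nonlinearity continuous into $L^2$, is a classical solution; this is the same level of detail as the paper's own appeal to \cite{Pa} in Lemma \ref{lem:existeff}, so it is acceptable, but it should be cited rather than asserted.
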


This result is proved, under slightly more general conditions, in \cite[Theorem 1]{NaSh}. (We point out that \cite{NaSh} uses a different, but equivalent norm in 
$\Sigma^2$.) To this end one uses the fact that the linear Schr\"odinger group $S(t) = e^{-it H}$ generated by the magnetic Hamiltonian
\[
H= \frac{1}{2}\big( - i \nabla_\bx  + A(\bx) \big)^2 \psi + V({ z}),
\]
admits space-time Strichartz estimates on some sufficiently small time interval $I\subset \R$, containing the origin. (Global in-time Strichartz estimates cannot be expected, in general, due to the 
possibility of eigenvalues within $\text{spec}\, H$.) 

\begin{remark} The existence time $T_1^\eps>0$ obtained above, in principle could shrink to zero as $\eps\to 0_+$, but it will be a 
consequence of our approximation result that this is, in fact, not the case. 
\end{remark}

The result above does not directly translate to the limiting equation \eqref{eqphi}. The reason for this is that \eqref{eqphi} does not contain the full three-dimensional Laplacian, but only $\partial^2_z$. 
Thus, the dispersive properties of the associated Schr\"odinger group are much weaker in this case, and one cannot expect the 
full range of Strichartz estimates to be available. Nevertheless one can prove the following result, using a classical fixed point argument.

\begin{lemma}\label{lem:existeff}
Let $V$ satisfy \eqref{Vhyp} and $\sigma \in \N$. Then for any $\psi_0 \in \Sigma^2$ there is a maximal existence time $T_{\rm max}\in(0,\infty]$ such that \eqref{eqphi} has a unique solution 
\[
\phi\in C([0,T_{\rm max}); \Sigma^2) \cap C^1((0,T_{\rm max}); L^2(\R^3)),
\]
depending continuously on the initial data $\psi_0$. In addition, the usual conservation laws for the total mass and energy hold, i.e.,
\[
\| \phi(t, \cdot)\|_{L^2} = \| \psi_0 \|_{L^2}, \quad E(\phi(t, \cdot)) = E(\psi_0), \quad \forall \, t\in [0, T_{\rm max}).
\]
\end{lemma}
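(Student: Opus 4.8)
The plan is to recast \eqref{eqphi} in Duhamel form and close a contraction mapping argument directly in $C([0,T];\Sigma^2)$, with no dispersive (Strichartz) input at all — this is viable precisely because $\Sigma^2$ is a Banach algebra and, since $\sigma\in\N$, the nonlinearity is polynomial. Concretely, I would introduce $U_0(t):=e^{it(\frac12\partial_z^2-V(z))}$, the unitary $C_0$-group on $L^2(\R^3)$ generated by the self-adjoint operator $-\frac12\partial_z^2+V(z)$, which acts in the $z$-variable only and commutes with $\nabla_x$ and with multiplication by $|x|^2$, and seek a fixed point of
\[
\Phi(\phi)(t):=U_0(t)\psi_0-i\lambda\int_0^t U_0(t-s)\,F_{\rm av}(\phi(s))\,ds,\qquad t\in[0,T].
\]
Two mapping properties drive everything. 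First, since $V$ is smooth and sub-quadratic, the construction of the Schr\"odinger propagator for sub-quadratic potentials in \cite{Fu,Fu1,Ya}, applied in the single variable $z$, shows that $e^{it(\frac12\partial_z^2-V(z))}$ preserves the one-dimensional analogue of $\Sigma^2$ with operator norm $\le C(T)$ for $|t|\le T$; tensoring with the identity in $x$ and using the commutation properties above yields $\sup_{|t|\le T}\|U_0(t)\|_{\mathcal{L}(\Sigma^2)}\le C(T)<\infty$, together with strong continuity of $U_0$ on $\Sigma^2$. Second, $e^{\pm i\theta\calH}$ is bounded on $\Sigma^2$ \emph{uniformly} in $\theta\in\R$: the operator $\calH$ is quadratic in $(x,\nabla_x)$, so $e^{i\theta\calH}$ lies in the metaplectic representation and conjugates each of $x_1,x_2,\partial_{x_1},\partial_{x_2}$ into a fixed linear combination of these four operators whose coefficients are trigonometric in $\theta$ — this is the quantum counterpart of the cyclotron flow, which is $2\pi$-periodic — while $e^{i\theta\calH}$ acts trivially in $z$; hence $\sup_{\theta\in\R}\|e^{\pm i\theta\calH}\|_{\mathcal{L}(\Sigma^2)}<\infty$, and one checks similarly that $\theta\mapsto e^{\pm i\theta\calH}u$ is continuous into $\Sigma^2$.

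Granting these, the nonlinear estimates are routine. Because $\sigma\in\N$, the map $v\mapsto|v|^{2\sigma}v$ is polynomial in $v$ and $\bar v$, so the Banach algebra property of $\Sigma^2$ (and the isometry $v\mapsto\bar v$) gives $\||v|^{2\sigma}v\|_{\Sigma^2}\le C\|v\|_{\Sigma^2}^{2\sigma+1}$ and, by telescoping, $\||v|^{2\sigma}v-|w|^{2\sigma}w\|_{\Sigma^2}\le C\big(\|v\|_{\Sigma^2}+\|w\|_{\Sigma^2}\big)^{2\sigma}\|v-w\|_{\Sigma^2}$. Conjugating by $e^{\pm i\theta\calH}$ and invoking the uniform bound, the same estimates hold for $F(\theta,\cdot)$ with constants independent of $\theta$, and averaging over $\theta\in[0,2\pi]$ they pass to $F_{\rm av}$; thus $F_{\rm av}\colon\Sigma^2\to\Sigma^2$ sends bounded sets to bounded sets and is Lipschitz on them. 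I would then verify in the standard way that, for $T>0$ small depending only on $\|\psi_0\|_{\Sigma^2}$, $\Phi$ is a contraction on a ball of $C([0,T];\Sigma^2)$ of radius comparable to $\|\psi_0\|_{\Sigma^2}$, giving a unique local solution; a routine continuation argument then produces a maximal time $T_{\rm max}\in(0,\infty]$ with the blow-up alternative, and Gronwall gives continuous dependence on $\psi_0$. The stated $C^1$-regularity into $L^2$ is read off from the equation: $-\frac12\partial_z^2+V(z)$ is bounded $\Sigma^2\to L^2$ (note $|V(z)|\lesssim 1+z^2$ by \eqref{Vhyp}), $F_{\rm av}\colon\Sigma^2\to\Sigma^2\hookrightarrow L^2$ is continuous, and $\phi\in C([0,T_{\rm max});\Sigma^2)$.

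It remains to establish the conservation laws. For the mass, using self-adjointness of the linear part and, after unwinding the definition with $e^{-i\theta\calH}$ unitary,
\[
\langle\phi,F_{\rm av}(\phi)\rangle_{L^2}=\frac{1}{2\pi}\int_0^{2\pi}\big\langle e^{-i\theta\calH}\phi,\,|e^{-i\theta\calH}\phi|^{2\sigma}e^{-i\theta\calH}\phi\big\rangle_{L^2}\,d\theta=\frac{1}{2\pi}\int_0^{2\pi}\big\|\,|e^{-i\theta\calH}\phi|^{\sigma+1}\big\|_{L^2}^2\,d\theta\in\R,
\]
one gets $\frac{d}{dt}\|\phi(t)\|_{L^2}^2=2\,\IM\big\langle\phi,\,(-\tfrac12\partial_z^2+V)\phi+\lambda F_{\rm av}(\phi)\big\rangle_{L^2}=0$. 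For the energy, a direct computation gives $\frac{d}{dt}E(\phi(t))=2\,\RE\big\langle-\tfrac12\partial_z^2\phi+V(z)\phi+\lambda F_{\rm av}(\phi),\,\partial_t\phi\big\rangle_{L^2}$, the only nontrivial term being that
\[
\frac{d}{ds}\Big|_{s=0}\int_0^{2\pi}\!\!\int_{\R^3}|e^{-i\theta\calH}(\phi+s\eta)|^{2\sigma+2}\,dx\,dz\,d\theta=(2\sigma+2)\int_0^{2\pi}\RE\langle F(\theta,\phi),\eta\rangle_{L^2}\,d\theta,
\]
which again follows from unitarity of $e^{-i\theta\calH}$; since \eqref{eqphi} identifies the first argument of the pairing with $i\partial_t\phi$, the right-hand side vanishes. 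To justify this chain rule despite $\phi$ being only $C^1$ into $L^2$ and continuous into $\Sigma^2$, I would first prove the conservation for initial data in a dense subspace along which the solution is smoother — e.g. data in the domain of a power of $-\Delta_\bx+|\bx|^2$, the extra regularity being propagated by the very same fixed-point scheme — and then pass to the limit using the continuous dependence established above.

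I expect the only genuinely non-routine points to be the two mapping properties, and above all the uniform-in-$\theta$ bound on $e^{\pm i\theta\calH}$ in $\mathcal{L}(\Sigma^2)$, which hinges on the metaplectic structure of $\calH$ and the periodicity of the cyclotron dynamics; everything else is the standard subcritical contraction-mapping machinery, which here must be carried out entirely in $\Sigma^2$ since \eqref{eqphi} contains only $\partial_z^2$ and thus offers no dispersion to exploit.
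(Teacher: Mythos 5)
Your proposal is correct and follows essentially the same route as the paper: Duhamel's formula with the group $e^{-it\mathcal H_z}$ (bounded on $\Sigma^2$ by the Fujiwara--Yajima theory for sub-quadratic potentials), local Lipschitz continuity of $F_{\rm av}$ on $\Sigma^2$ via the Banach algebra property, a standard contraction/continuation argument, and classical arguments for the conservation laws. The only cosmetic difference is how you justify the $\theta$-uniform boundedness of $e^{\pm i\theta\calH}$ on $\Sigma^2$ (metaplectic/cyclotron-flow reasoning), whereas the paper obtains the same fact in its Lemma \ref{bounded} by noting that $\calH$ commutes with the operators $\calH_0$, $\partial_z^2$ and $z^2$ defining an equivalent norm.
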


\begin{proof}
The result follows from a classical fixed-point argument (see, e.g., \cite{caze}) based on Duhamel's formula for the solution $\phi$, i.e.,
\[
\phi(t) = U(t)\psi_0 - i \lambda \int_0^t U(t-s) F_{\rm av} (\phi(s))\, ds=: \Phi(\phi)(t),
\]
where $U(t)=e^{-it \mathcal H_z}$ is the Schr\"odinger group generated by 
\[
\mathcal H_z = -\frac{1}{2}\partial^2_z +V(z),
\]
with smooth, sub-quadratic potential $V(z)$. In order to prove that the map $\Phi$ is a 
contraction on some suitably chosen ball $B_R(0)\subset C([0,T); \Sigma^2)$, we first recall that the results of \cite{Fu, Fu1} show that for any  
$\varphi \in \Sigma^2$: $U(\, \cdot\, )\varphi \in C(\R; \Sigma ^2 )\cap C^1(\R; L^2(\R^3))$. It therefore suffices to show that the nonlinear term $F_{\rm av}$ is
locally Lipschitz to conclude the desired result, cf. \cite[Chapter 6.1]{Pa}. To this end, we first note that for $\sigma \in \N$, the map $z\mapsto |z|^{2\sigma} z$ is smooth and locally Lipschitz. 
This fact directly translates to $F_{\rm av}$ in view of the second line in \eqref{eq:average} and the 
fact that $\Sigma^2$ is a Banach algebra. Continuous dependence on the initial data, as well as the conservation laws for the mass and energy then follow by classical arguments, cf. \cite{caze, Pa}.
\end{proof}

\begin{remark} Unfortunately, the conservation laws of mass and energy are not sufficient to infer global in-time existence of such 
solutions, i.e., $T_{\rm max}=+\infty$, 
even in the defocusing case $\lambda\ge 0$. In order to obtain a global result, one would need to work with solutions in 
$\Sigma^1 = \{u\in H^1(\R^3):\ |\bx| u \in L^2(\R^3)\}$, whose life-span can be controlled by 
the mass and energy. This, however, results in a severe restriction on $\sigma$ when one tries to prove Lipschitz continuity 
of the nonlinearity in three dimensions. 
\end{remark}

%%%%%%%%%%%%%%%%%%%%%%%%%%%%%%

\section{A-priori estimates and averaging} \label{sec:aver}

This section is devoted to the proof of Theorem \ref{thmmain} (i) and (ii). Clearly, item (i) is a direct consequence of Lemma \ref{lem:existeff}. 
In the following it will be convenient to work with the norm
\begin{equation*}
\label{eq:defequivnorm}
\| f\|_{\Sigma^2}' := \left( \|f\|^2_{L^2} + \|\mathcal H_0 f\|^2_{L^2} + \| \partial^2_z f \|^2_{L^2} +\| z^2 f \|_{L^2}^2 \right)^{1/2}
\end{equation*}
defined in terms of the harmonic oscillator
$$
\mathcal H_0=-\Delta_x+\frac{1}{4}|x|^2.
$$
One can show (see \cite{He} and also \cite{bcm}) that there are constants $0<C_{_<}<C_{_>}<\infty$ such that for all $f\in\Sigma^2$,
\begin{equation*}
\label{eq:equivnorm}
C_{_<} \|f\|_{\Sigma^2} \le \|f\|_{\Sigma^2}'\le C_{_>} \|f\|_{\Sigma^2} \,.
\end{equation*}

As a preliminary to the proof of item (ii) we investigate the boundedness properties of $e^{-it\calH}$ and $e^{-it\calH_0}$ on the space $\Sigma^2$.

\begin{lemma}\label{bounded}
For all $t\in\R$ and all $f\in\Sigma^2$,
\[
\|e^{-it\calH} f\|_{\Sigma^2}' =\|f\|_{\Sigma^2}'.
\]
\end{lemma}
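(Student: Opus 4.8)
The plan is to reduce everything to the product structure of the norm $\|\cdot\|_{\Sigma^2}'$, exploiting the fact that $e^{-it\mathcal{H}}$ acts only in the variables $x=(x_1,x_2)$. Since $\mathcal{H}$ is essentially self-adjoint and involves no $z$-derivatives, the group $e^{-it\mathcal{H}}$, trivially extended in $z$, is unitary on $L^2(\R^3)$; in particular $\|e^{-it\mathcal{H}}f\|_{L^2}=\|f\|_{L^2}$, and moreover $e^{-it\mathcal{H}}$ commutes with $\partial_z^2$ and with multiplication by $z^2$. Consequently $\|\partial_z^2 e^{-it\mathcal{H}}f\|_{L^2}=\|\partial_z^2 f\|_{L^2}$ and $\|z^2 e^{-it\mathcal{H}}f\|_{L^2}=\|z^2 f\|_{L^2}$, so three of the four terms defining $\|\cdot\|_{\Sigma^2}'$ are already seen to be preserved.

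The remaining term is $\|\mathcal{H}_0 e^{-it\mathcal{H}}f\|_{L^2}$, and the heart of the proof is to show that $\mathcal{H}_0$ commutes with $\mathcal{H}$, hence with $e^{-it\mathcal{H}}$. A direct computation from the definitions gives $\mathcal{H}=\tfrac12\mathcal{H}_0+\tfrac12 L$, where $L=-i\,x^\perp\cdot\nabla_x=-i(x_1\partial_{x_2}-x_2\partial_{x_1})$ is the angular momentum operator; since $\mathcal{H}_0=-\Delta_x+\tfrac14|x|^2$ is rotationally invariant, $[\mathcal{H}_0,L]=0$, and therefore $[\mathcal{H}_0,\mathcal{H}]=0$. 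Equivalently, $\mathcal{H}$ and $\mathcal{H}_0$ are simultaneously diagonalized by the standard Landau eigenbasis of Hermite-type functions, both being functions of the two commuting magnetic number operators. By the functional calculus this yields $\mathcal{H}_0 e^{-it\mathcal{H}}=e^{-it\mathcal{H}}\mathcal{H}_0$, whence $\|\mathcal{H}_0 e^{-it\mathcal{H}}f\|_{L^2}=\|e^{-it\mathcal{H}}\mathcal{H}_0 f\|_{L^2}=\|\mathcal{H}_0 f\|_{L^2}$.

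Combining the four identities, I first conclude that $e^{-it\mathcal{H}}f\in\Sigma^2$ whenever $f\in\Sigma^2$ --- indeed each of $\mathcal{H}_0 e^{-it\mathcal{H}}f$, $\partial_z^2 e^{-it\mathcal{H}}f$, $z^2 e^{-it\mathcal{H}}f$ lies in $L^2(\R^3)$, so $e^{-it\mathcal{H}}f\in\Sigma^2$ by the norm equivalence recalled above --- and then that $\|e^{-it\mathcal{H}}f\|_{\Sigma^2}'=\|f\|_{\Sigma^2}'$, as claimed.

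I expect the only genuine subtlety to be the rigorous justification of $[\mathcal{H}_0,e^{-it\mathcal{H}}]=0$ at the operator, rather than merely formal, level. This can be handled either by working in the common Landau eigenbasis, so that both operators become multiplication operators in the same orthonormal decomposition of $L^2(\R^2)$, or by verifying the commutation on the common core $C_0^\infty(\R^2)$, on which $\mathcal{H}$ is essentially self-adjoint, and then passing to the unitary groups. I would take the first route, since the eigenbasis is already at hand from the spectral discussion around \eqref{spec}.
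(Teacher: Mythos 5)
Your proof is correct and follows essentially the same route as the paper: the only nontrivial point is that $\mathcal H_0$ commutes with $\mathcal H$, which both you and the paper obtain from the decomposition $\mathcal H=\tfrac12\mathcal H_0-\tfrac{i}{2}x^\perp\cdot\nabla_x$ and the rotational invariance of $\mathcal H_0$, while the $z$-dependent terms and the $L^2$ term are trivially preserved since $e^{-it\mathcal H}$ is unitary and acts only in the $x$-variables. Your extra remarks on making the commutation rigorous via the common Landau eigenbasis are a fine (if slightly more detailed) justification of the same fact.
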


\begin{proof}
This simply follows from the fact that the operator $\mathcal H$ commutes with all four operators $1$, $\mathcal H_0$, $-\partial_z^2$ and $z^2$ that appear in the definition of the norm $\|\cdot\|_{\Sigma^2}'$. For $\mathcal H_0$ this can be seen by noting that $$\mathcal H=\frac{1}{2}\mathcal H_0-\frac{i}{2}x^\perp\cdot \nabla_x$$ and that $\mathcal H_0$ commutes with $ix^\perp\cdot \nabla_x$.
\end{proof}

\begin{lemma}\label{boundedhz}
There is a $C>0$ such that for all $t\in\R$ and all $f\in\Sigma^2$,
\[
\|e^{-it\calH_z} f\|_{\Sigma^2}'\le e^{C|t|} \|f\|_{\Sigma^2}'.
\]
\end{lemma}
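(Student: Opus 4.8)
The plan is to estimate the growth of $\|e^{-it\mathcal H_z}f\|_{\Sigma^2}'$ by differentiating in time the four squared quantities appearing in the definition of $\|\cdot\|_{\Sigma^2}'$. Writing $u(t)=e^{-it\mathcal H_z}f$, so that $i\partial_t u=\mathcal H_z u$ with $\mathcal H_z=-\tfrac12\partial_z^2+V(z)$, I would note immediately that $\|u(t)\|_{L^2}=\|f\|_{L^2}$ is conserved, and similarly $\|\mathcal H_z^{1/2}u(t)\|$-type quantities are controlled because $\mathcal H_z$ commutes with $U(t)=e^{-it\mathcal H_z}$. The real content is controlling $\|\mathcal H_0 u(t)\|_{L^2}$, $\|\partial_z^2 u(t)\|_{L^2}$ and $\|z^2 u(t)\|_{L^2}$, since $\mathcal H_z$ does not commute with $\partial_z^2$ or with multiplication by $z^2$ (it would if $V$ were zero or quadratic, but $V$ is only sub-quadratic).

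First I would treat $\mathcal H_0=-\Delta_x+\tfrac14|x|^2$: this operator acts only in the $x$ variables and therefore commutes with $\mathcal H_z$ (which acts only in $z$), so $\|\mathcal H_0 u(t)\|_{L^2}=\|\mathcal H_0 f\|_{L^2}$ with no growth. Next, for the $z$-derivatives and the $z^2$ weight, I would compute $\tfrac{d}{dt}\|\partial_z^2 u\|_{L^2}^2$ and $\tfrac{d}{dt}\|z^2 u\|_{L^2}^2$. Differentiating and using the equation produces commutator terms: $\tfrac{d}{dt}\|\partial_z^2 u\|^2 = 2\,\RE\langle \partial_z^2 u, -i[\partial_z^2,\mathcal H_z]u\rangle$, where $[\partial_z^2,\mathcal H_z]=[\partial_z^2,V]= V''(z)+2V'(z)\partial_z$, which is a first-order operator with coefficients bounded by \eqref{Vhyp} (only here do we use $\alpha\ge 2$; $V'$ and $V''$ are bounded, though $V$ itself need not be). Hence this term is bounded by $C(\|\partial_z^2u\|^2 + \|\partial_z u\|\,\|\partial_z^2 u\|)$, and since $\|\partial_z u\|\le C\|\mathcal H_z^{1/2}u\|\le C\|f\|_{\Sigma^2}'$ is already controlled, one gets a differential inequality $\tfrac{d}{dt}\|\partial_z^2u\|^2\le C(\|u\|_{\Sigma^2}')^2$. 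A parallel computation for $\|z^2u\|_{L^2}^2$ uses $[z^2,\mathcal H_z]=[z^2,-\tfrac12\partial_z^2]=-2z\partial_z-1$, again a first-order operator with polynomially bounded coefficients, giving the analogous bound after Cauchy--Schwarz and interpolation (using that $\|z u\|$ sits between $\|u\|$ and $\|z^2u\|$).

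Collecting the four pieces yields $\tfrac{d}{dt}(\|u(t)\|_{\Sigma^2}')^2\le C(\|u(t)\|_{\Sigma^2}')^2$, and Gronwall's inequality gives $(\|u(t)\|_{\Sigma^2}')^2\le e^{C|t|}(\|f\|_{\Sigma^2}')^2$, hence the claimed bound $\|e^{-it\mathcal H_z}f\|_{\Sigma^2}'\le e^{C|t|}\|f\|_{\Sigma^2}'$ (absorbing the factor $1/2$ into $C$). To make the differentiation rigorous I would first carry out the computation for $f$ in a dense core (e.g.\ the Schwartz class, which $U(t)$ preserves since $V$ is smooth sub-quadratic, cf.\ the mapping properties of $U(t)$ on $\Sigma^2$ recalled from \cite{Fu, Fu1}), obtain the estimate there, and then extend to all $f\in\Sigma^2$ by density and the boundedness of the quantities involved. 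The main obstacle is purely the bookkeeping of the commutators and the interpolation needed to close the Gronwall estimate with only the a priori controlled quantities $\|u\|$, $\|\mathcal H_z^{1/2}u\|$, $\|z u\|$ on the right-hand side; there is no genuine analytic difficulty once one observes that $\mathcal H_z$ acts only in $z$ and commutes with the $x$-part of the norm.
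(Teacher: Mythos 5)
Your overall strategy is exactly the paper's: conservation of the $\mathcal H_0$- and $L^2$-parts of the norm, differentiation of $\|\partial_z^2 u(t)\|_{L^2}^2$ and $\|z^2 u(t)\|_{L^2}^2$, commutator estimates for $[\partial_z^2,\mathcal H_z]$ and $[z^2,\mathcal H_z]$, Gronwall, and a density argument. However, one step fails as written. You claim that \eqref{Vhyp} makes $V'$ (as well as $V''$) bounded; it does not. The hypothesis only bounds $\partial^\alpha V$ for $\alpha\ge 2$, so $V'$ may grow linearly --- indeed for the paper's model example $V(z)=\tfrac12 z^2$ one has $V'(z)=z$. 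Consequently your bound of the term $\RE\langle \partial_z^2 u, -i[\partial_z^2,V]u\rangle$ by $C(\|\partial_z^2u\|^2+\|\partial_z u\|\,\|\partial_z^2u\|)$ is not valid: the commutator $2V'\partial_z+V''$ produces $\|(1+|z|)\partial_z u\|_{L^2}$, not $\|\partial_z u\|_{L^2}$. The estimate can still be closed, because $\|z\partial_z u\|_{L^2}$ is controlled by $\|u\|_{\Sigma^2}'$ via the mixed interpolation $\|z\partial_z u\|_{L^2}^2\lesssim \|u\|_{L^2}^2+\|z^2u\|_{L^2}\,\|\partial_z^2u\|_{L^2}$ (integrate by parts in $\int z^2|\partial_z u|^2$), which is in effect what the paper does when it bounds $\|[\mathcal H_z,\partial_z^2]u\|_{L^2}\lesssim\|(|z|+1)\partial_z u\|_{L^2}+\|u\|_{L^2}\lesssim\|u\|_{\Sigma^2}$. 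Note that the same quantity $\|z\partial_z u\|_{L^2}$, not $\|zu\|_{L^2}$, is what you need for the $[z^2,\mathcal H_z]=\pm(2z\partial_z+1)$ term, so the interpolation you quote there (``$\|zu\|$ sits between $\|u\|$ and $\|z^2u\|$'') is also not the relevant one.

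A second, smaller point: you invoke conservation of $\|\mathcal H_z^{1/2}u(t)\|_{L^2}$ to control $\|\partial_z u\|_{L^2}$. Under \eqref{Vhyp} the operator $\mathcal H_z$ need not be bounded below (e.g.\ $V(z)=-z^2$ is admissible), so $\mathcal H_z^{1/2}$ is not available, and even when it is, $\|\partial_z u\|\lesssim\|\mathcal H_z^{1/2}u\|$ requires a sign condition on $V$. This detour is unnecessary: $\|\partial_z u\|_{L^2}^2\le\|u\|_{L^2}\,\|\partial_z^2u\|_{L^2}$ already controls it by the quantities in the Gronwall functional. With these two repairs your argument coincides with the paper's proof.
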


\begin{proof}
Clearly, the term involving $\calH_0$ in the definition of $\|\cdot\|_{\Sigma^2}'$ is invariant under $e^{-it\calH_z}$. We shall prove now that for all sufficiently smooth and rapidly decaying $f$,
\begin{equation}\label{eq:hzdiffineq}
\begin{aligned}
& \frac{d}{dt} \left( \| e^{-it\calH_z} f \|_{L^2}^2 +  \| \partial_z^2 e^{-it\calH_z} f \|_{L^2}^2 + \|z^2 e^{-it\calH_z} f \|_{L^2}^2 \right)  \\
& \qquad \le C \left(  \| e^{-it\calH_z} f \|_{L^2}^2 + \| \partial_z^2 e^{-it\calH_z} f \|_{L^2}^2 +  \| z^2 e^{-it\calH_z} f\|_{L^2}^2 \right).
\end{aligned}
\end{equation}
By Gronwall's lemma this implies that
\begin{align*}
& \| e^{-it\calH_z} f \|_{L^2}^2 + \| \partial_z^2 e^{-it\calH_z} f \|_{L^2}^2 + \|z^2 e^{-it\calH_z} f \|_{L^2}^2 \\
& \quad \le e^{C|t|} \left( \|f\|_{L^2}^2+ \| \partial_z^2 f \|_{L^2}^2 + \|z^2 f \|_{L^2}^2 \right),
\end{align*}
which, by density, extends to all $f\in\Sigma^2$ and, thus, yields the asserted bound on $\|e^{-it\calH_z} f\|_{\Sigma^2}'$.

It remains to prove \eqref{eq:hzdiffineq}. We first compute for sufficiently ``nice'' functions $f$
\begin{align*}
& \frac{d}{dt} \left( \| e^{-it\calH_z} f \|_{L^2}^2 +  \| \partial_z^2 e^{-it\calH_z} f \|_{L^2}^2 + \|z^2 e^{-it\calH_z} f \|_{L^2}^2 \right) \\
& \qquad = i \left( e^{-it\calH_z} f, [\calH_z,\partial_z^4] e^{-it\calH_z} f \right) + i \left( e^{-it\calH_z} f, [\calH_z ,z^4 ] e^{-it\calH_z} f \right) \\
& \qquad = -2\im \left( \partial_z^2 e^{-it\calH_z} f, [\calH_z,\partial_z^2] e^{-it\calH_z} f \right) -2\im \left( z^2 e^{-it\calH_z} f, [\calH_z ,z^2 ] e^{-it\calH_z} f \right).
\end{align*}
Thus, by the Schwarz inequality,
\begin{align*}
& \frac{d}{dt} \left( \| e^{-it\calH_z} f \|_{L^2}^2 +  \| \partial_z^2 e^{-it\calH_z} f \|_{L^2}^2 + \|z^2 e^{-it\calH_z} f \|_{L^2}^2 \right) \\
& \quad \le 2 \| \partial_z^2 e^{-it\calH_z} f\|_{L^2} \left\| [\calH_z,\partial_z^2] e^{-it\calH_z} f \right\|_{L^2} + 2 \| z^2 e^{-it\calH_z} f\|_{L^2} \left\| [\calH_z ,z^2 ] e^{-it\calH_z} f \right\|_{L^2} \\
& \quad \le 2 \left( \| \partial_z^2 e^{-it\calH_z} f\|_{L^2}^2 + \| z^2 e^{-it\calH_z} f\|_{L^2}^2 \right)^{1/2} \\
& \quad \qquad \times \left( \left\| [\calH_z,\partial_z^2] e^{-it\calH_z} f \right\|_{L^2}^2 + \left\| [\calH_z ,z^2 ] e^{-it\calH_z} f \right\|_{L^2}^2 \right)^{1/2}.
\end{align*}
We now compute the commutators
$$
[\calH_z,\partial_z^2] = [V,\partial_z^2] = -2 V'\partial_z - V''
\quad
\text{and}
\quad
[\calH_z ,z^2 ] = -\frac12 [\partial_z^2,z^2 ] = - 2z\partial_z -1.
$$
Thus, clearly,
$$
\left\| [\calH_z ,z^2 ] e^{-it\calH_z} f \right\|_{L^2} \lesssim \left\| e^{-it\calH_z} f \right\|_{\Sigma^2}.
$$
Moreover, in view of our assumption \eqref{Vhyp} on $V$ (with $\alpha=2$),
$$
\left\| [\calH_z,\partial_z^2] e^{-it\calH_z} f \right\|_{L^2}
\lesssim \left\| (|z|+1)\partial_z e^{-it\calH_z} f \right\|_{L^2} + \left\| e^{-it\calH_z} f \right\|_{L^2} \lesssim \left\| e^{-it\calH_z} f \right\|_{\Sigma^2}.
$$
This concludes the proof of \eqref{eq:hzdiffineq} and therefore of the lemma.
\end{proof}

We now begin with the proof of item (ii) in Theorem \ref{thmmain}, following the same strategy as in \cite{MeSp}. We fix $0<T<T_{\rm max}$, where $T_{\rm max}$ is as in item (i) of the theorem, and set
\begin{equation}
\label{defM}
M:=\sup_{\eps>0}\|e^{-it\calH/\eps^2}\phi\|_{L^\infty((0,T)\times \RR^3)}.
\end{equation}
Because of the continuous imbedding $\Sigma^2\hookrightarrow H^2(\R^3)\hookrightarrow L^\infty(\R^3)$, Lemma \ref{bounded} and the existence result in Lemma \ref{lem:existeff} we have
\[
\|e^{-it\calH /\eps^2}\phi\|_{L^\infty((0,T)\times \RR^3)}\le C\|e^{-it\calH/\eps^2}\phi\|_{L^\infty((0,T),\Sigma^2)} = C\|\phi\|_{L^\infty((0,T),\Sigma^2)}<+\infty,
\]
that is, $M<\infty$. In particular, we have $\|\psi_0\|_{L^\infty}=\|\phi(0, \cdot)\|_{L^\infty}\le M.$ Next, we introduce
\begin{equation}
\label{defTeps}
T^\eps:=\sup\left\{t\in [0,T_1^\eps)\,:\, \|\psi^\eps(s)\|_{L^\infty}\le 2M \ \mbox{for all }s\in [0,t]\right\},
\end{equation}
where $T_1^\eps>0$ is local existence time defined in Proposition \ref{prop:existNLS}. We have $T^\eps>0$ by continuity and the fact that $\|\psi^\eps(0)\|\le M$.

\begin{lemma}\label{lem:apriori} 
There is a $C_M>0$ such that for all $\eps>0$ and all $t\in [0, T^\eps]$,
\[\| \psi^\eps(t) \|_{\Sigma^2}' \le \| \psi_0 \|_{\Sigma^2}' \ e^{C_M t}.\]
\end{lemma}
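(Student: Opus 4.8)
The plan is to remove the fast oscillation by passing to the filtered unknown
\[
u^\eps(t):=e^{it\calH/\eps^2}\psi^\eps(t),
\]
and then to close a Gronwall estimate for $\|u^\eps(t)\|_{\Sigma^2}'$, exploiting that Lemmas~\ref{bounded} and~\ref{boundedhz} are tailored precisely for this. Since $\calH$ commutes with $\pa_z^2$ and with $V$, a short computation turns \eqref{NLS} into
\[
i\pa_t u^\eps=\calH_z u^\eps+\lambda F\!\left(t/\eps^2,u^\eps\right),\qquad u^\eps|_{t=0}=\psi_0,
\]
with $\calH_z=-\frac12\pa_z^2+V(z)$ and $F$ as in \eqref{eq:F}; moreover Lemma~\ref{bounded} gives $\|u^\eps(t)\|_{\Sigma^2}'=\|\psi^\eps(t)\|_{\Sigma^2}'$ for every $t$, so it suffices to bound $u^\eps$. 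Writing Duhamel's formula with $U(t)=e^{-it\calH_z}$ (justified as in the proof of Lemma~\ref{lem:existeff}) and invoking Lemma~\ref{boundedhz} yields
\[
\|u^\eps(t)\|_{\Sigma^2}'\le e^{Ct}\|\psi_0\|_{\Sigma^2}'+|\lambda|\int_0^t e^{C(t-s)}\left\|F\!\left(s/\eps^2,u^\eps(s)\right)\right\|_{\Sigma^2}'\,ds .
\]

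The next step is to simplify the nonlinear term. Because $e^{-is\calH/\eps^2}u^\eps(s)=\psi^\eps(s)$, the definition \eqref{eq:F} gives $F(s/\eps^2,u^\eps(s))=e^{is\calH/\eps^2}\big(|\psi^\eps(s)|^{2\sigma}\psi^\eps(s)\big)$, and Lemma~\ref{bounded} again yields $\|F(s/\eps^2,u^\eps(s))\|_{\Sigma^2}'=\big\|\,|\psi^\eps(s)|^{2\sigma}\psi^\eps(s)\,\big\|_{\Sigma^2}'$. It then remains to establish a nonlinear estimate of the form
\[
\left\|\, |g|^{2\sigma}g \right\|_{\Sigma^2}\le C\,\|g\|_{L^\infty}^{2\sigma}\,\|g\|_{\Sigma^2},\qquad C=C(\sigma),
\]
for $g\in\Sigma^2\cap L^\infty(\R^3)$. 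I would prove this by applying the Leibniz rule to $|g|^{2\sigma}g=g^{\sigma+1}\bar g^\sigma$: the contributions to $\|g\|_{L^2}$, to $\||\bx|^2(\cdot)\|_{L^2}$, and to all derivative terms in which the two allowed derivatives land on a single factor are bounded by $\|g\|_{L^\infty}^{2\sigma}\|g\|_{\Sigma^2}$ at once; the only genuinely quadratic contributions are of the type $g^{\sigma-1}\bar g^\sigma\,\pa_j g\,\pa_k g$, with $L^2$-norm $\le\|g\|_{L^\infty}^{2\sigma-1}\|\na g\|_{L^4}^2$, and here the Gagliardo--Nirenberg inequality $\|\na g\|_{L^4}^2\lesssim\|g\|_{L^\infty}\|g\|_{H^2}$ on $\R^3$ supplies the missing power of the sup norm.

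Finally, I would restrict to $t\in[0,T^\eps]$, where $\|\psi^\eps(s)\|_{L^\infty}\le2M$ for all $s\in[0,t]$ by the definition \eqref{defTeps} of $T^\eps$. Combining the nonlinear estimate with the equivalence of the two norms $\|\cdot\|_{\Sigma^2}$ and $\|\cdot\|_{\Sigma^2}'$ recalled above, and once more with Lemma~\ref{bounded}, one obtains a constant $C_M>0$ (depending only on $M$ and $\sigma$) with
\[
\left\|F\!\left(s/\eps^2,u^\eps(s)\right)\right\|_{\Sigma^2}'\le C_M\,\|u^\eps(s)\|_{\Sigma^2}'\qquad\text{for }s\in[0,T^\eps].
\]
Inserting this into the Duhamel inequality and setting $y(t):=e^{-Ct}\|u^\eps(t)\|_{\Sigma^2}'$ gives $y(t)\le\|\psi_0\|_{\Sigma^2}'+|\lambda|C_M\int_0^t y(s)\,ds$, so Gronwall's lemma yields $y(t)\le\|\psi_0\|_{\Sigma^2}'e^{|\lambda|C_M t}$, i.e.\ $\|u^\eps(t)\|_{\Sigma^2}'\le\|\psi_0\|_{\Sigma^2}'e^{(C+|\lambda|C_M)t}$; renaming $C+|\lambda|C_M$ as $C_M$ and using $\|u^\eps(t)\|_{\Sigma^2}'=\|\psi^\eps(t)\|_{\Sigma^2}'$ once more gives the claim.

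Two points deserve emphasis. First, the bound is uniform in $\eps$ for free: the large frequency $\eps^{-2}$ enters only through $e^{\pm is\calH/\eps^2}$, which by Lemma~\ref{bounded} preserves $\|\cdot\|_{\Sigma^2}'$ exactly, so every constant above is $\eps$-independent. Second, I expect the main obstacle to be precisely the nonlinear estimate, namely arranging that the terms quadratic in $\na g$ remain \emph{linear} in $\|g\|_{\Sigma^2}$ by spending one power of $\|g\|_{L^\infty}$ through Gagliardo--Nirenberg; this is what makes the Gronwall loop close, and it is the reason the $L^\infty$-control built into the definition of $T^\eps$ is indispensable. (Alternatively, one could avoid the filtering and differentiate $(\|\psi^\eps(t)\|_{\Sigma^2}')^2$ directly, using that $\calH$ commutes with each of $1,\calH_0,-\pa_z^2,z^2$ so that the $\eps^{-2}\calH$-term drops out of the resulting identity; the nonlinear estimate required is the same.)
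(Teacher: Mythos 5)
Your proposal is correct and follows essentially the same route as the paper: an $\eps$-uniform propagator bound in $\|\cdot\|_{\Sigma^2}'$ from Lemmas \ref{bounded} and \ref{boundedhz}, Duhamel's formula, the Moser-type estimate $\||f|^{2\sigma}f\|_{\Sigma^2}'\lesssim\|f\|_{L^\infty}^{2\sigma}\|f\|_{\Sigma^2}'$ combined with the $L^\infty$ control built into the definition of $T^\eps$, and Gronwall. Filtering by $e^{it\calH/\eps^2}$ rather than estimating $\psi^\eps$ directly via the combined group $e^{-it(\eps^{-2}\calH+\calH_z)}$ is only a cosmetic difference (the two are equivalent since $\calH$ and $\calH_z$ commute and $e^{it\calH/\eps^2}$ preserves $\|\cdot\|_{\Sigma^2}'$), and your Leibniz/Gagliardo--Nirenberg derivation of the Moser inequality correctly supplies a detail the paper merely asserts.
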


 \begin{proof}
It follows from Lemmas \ref{bounded} and \ref{boundedhz} and the fact that $\calH$ and $\calH_z$ commute that for all $t\in\R$ and $f\in\Sigma^2$,
\begin{equation}
\label{eq:propagator}
\left\| e^{-it(\eps^{-2}\calH+\calH_z)} f \right\|_{\Sigma^2}'\le e^{C|t|} \|f\|_{\Sigma^2}'.
\end{equation}
The crucial point here is that the right side is independent of $\eps$.

Equation \eqref{NLS} for $\psi^\eps$ in Duhamel form reads
\begin{equation*}
%\label{eq:nlspsidumel}
\psi^\eps(t) = e^{-it(\eps^{-2}\calH+\calH_z)} \psi_0 - i\lambda \int_0^t e^{-i(t-s)(\eps^{-2}\calH+\calH_z)}|\psi^\eps(s)|^{2\sigma} \psi^\eps(s)\,ds.
\end{equation*} 
Therefore, according to \eqref{eq:propagator}, if $t\geq 0$,
$$
\left\|\psi^\eps(t)\right\|_{\Sigma^2}' \le e^{C t} \left( \|\psi_0\|_{\Sigma^2}' + |\lambda| \int_0^t e^{C s} \left\| |\psi^\eps(s)|^{2\sigma} \psi^\eps(s) \right\|_{\Sigma^2}' ds \right).
$$
Since $\sigma$ is an integer we easily find the following Moser-type inequality,
$$
\left\| |f|^{2\sigma} f \right\|_{\Sigma^2}' \le C' \|f\|_{L^\infty}^{2\sigma} \|f\|_{\Sigma^2}' \,.
$$
Therefore, recalling that $\|\psi^\eps(s)\|_{L^\infty}\le 2M$ if $s\le T^\eps$, we obtain for all $t\in [0,T^\eps]$,
$$
\left\|\psi^\eps(t)\right\|_{\Sigma^2}' \le e^{C t} \left( \|\psi_0\|_{\Sigma^2}' + C' |\lambda| (2M)^{2\sigma} \int_0^t e^{C s} \left\| \psi^\eps(s) \right\|_{\Sigma^2} ds \right).
$$
By Gronwall's lemma this yields
$$
\left\|\psi^\eps(t)\right\|_{\Sigma^2}' \le \|\psi_0\|_{\Sigma^2}' e^{C t+ C' |\lambda | (2M)^{2\sigma} t}
$$
for all $t\in (0,T^\eps]$, which proves the lemma.
\end{proof}

In view of the continuity of $t\mapsto \psi^\eps(t)\in \Sigma^2$, an important consequence of the foregoing lemma is that
\begin{equation}\label{ifthen}
\text{if $T^\eps < \infty$, then $T^\eps < T_1^\eps$ and $\|\psi^\eps(T^\eps)\|_{L^\infty} = 2M$,}
\end{equation}
a fact we shall use in the proof below.

\begin{proof}[Proof of Theorem \ref{thmmain}(ii)]
We consider the filtered unknown $\phi^\eps := e^{i t \mathcal H/\eps^2} \psi^\eps$, which satisfies
\begin{equation*}
i\pa_t \phi^\eps= -\frac{1}{2} \partial^2_z \phi^\eps + V(z) \phi + \lambda F\left(\frac{t}{\eps^2}, \phi^\eps\right),\quad \phi^\eps|_{t=0} = \psi_0,
\end{equation*}
where $F$ is defined in \eqref{eq:F}. Denoting the difference $$u^\eps := \phi^\eps - \phi$$ for $0\le t < \min( T_{\rm max}, T_1^\eps)$ and recalling that $\phi$ solves \eqref{eqphi} we obtain that $u^\eps$ satisfies
\begin{align*}
u^\eps(t)&=\lambda \int_0^t U(t-s)\left(F\left(\frac{s}{\eps^2},\phi^\eps(s)\right)-F\left(\frac{s}{\eps^2},\phi(s)\right)\right)ds\\
&\quad +\lambda \int_0^t U(t-s)\left(F\left(\frac{s}{\eps^2},\phi(s)\right)-F_{\rm av}(\phi(s))\right)ds\\
&\equiv A_1+A_2.
\end{align*}
Here we denote, as before, the Schr\"odinger group $U(t)=e^{-i t \mathcal H_z}$, and we have also used the fact that $u^\eps(0) = 0$. 

In order to estimate $A_1$, we recall that for $0\le t\le \min(T,T^\eps)$, \eqref{defM} and \eqref{defTeps} imply that
\begin{align*}
\|{ A_1}\|_{L^2}&\lesssim\int_0^t\left\||\psi^\eps(s)|^{2\sigma}\psi^\eps(s)-|e^{-is{ \calH}/\eps^2}\phi(s)|^{2\sigma}e^{-is{ \calH}/\eps^2}\phi(s)\right\|_{L^2}ds\\
&\lesssim\int_0^t\left(\|\psi^\eps(s)\|_{L^\infty}^{2\sigma}+\left\|e^{-is{ \calH}/\eps^2}\phi(s)\right\|_{L^\infty}^{2\sigma}\right)\|\psi^\eps(s)-e^{-is{ \calH}/\eps^2}\phi(s)\|_{L^2}ds\\
&\lesssim M^{2\sigma}\int_0^t\|u^\eps(s)\|_{L^2}ds \\
&\lesssim \int_0^t\|u^\eps(s)\|_{L^2}ds.
\end{align*}
Here and in the rest of this proof we use the convention that the implied constant in $\lesssim$ may depend on $T$, but not on $\eps$.

On the other hand, in order to estimate $A_2$ we introduce the following function, defined on $\RR\times \Sigma^2$,
$$\mathcal F(\theta,u)=\int_0^\theta (F(s,u)-F_{\rm av}(u))ds,$$
and write as in \cite{bcm, MeSp} 
\begin{align*}
& U(t-s)\left(F\left(\frac{s}{\eps^2},\phi(s)\right)-F_{\rm av}(\phi(s))\right)\\
&=\eps^2\frac{d}{ds}\left(U(t-s)\mathcal F\left(\frac{s}{\eps^2},\phi(s)\right)\right)
-i\eps^2 U(t-s)\mathcal H_z\mathcal F\left(\frac{s}{\eps^2},\phi(s)\right) \\
& \quad -\eps^2 U(t-s)D_u\mathcal F\left(\frac{s}{\eps^2},\phi(s)\right)\cdot\pa_t\phi(s),
\end{align*}
where we recall that $\mathcal H_z= -\frac{1}{2}\partial^2_z +V(z)$. Then we can bound
\begin{align*}
\|A_2\|_{L^2}&\le \eps^2|\lambda|\left\|\mathcal F\left(\frac{t}{\eps^2},\phi(t)\right)\right\|_{L^2}+\eps^2|\lambda|\int_0^t\left\| \mathcal H_z\mathcal F\left(\frac{s}{\eps^2},\phi(s)\right)\right\|_{L^2}ds\\
&\quad +\eps^2|\lambda|\int_0^t\left\|D_u\mathcal F\left(\frac{s}{\eps^2},\phi(s)\right)\cdot\pa_t\phi(s)\right\|_{L^2}ds.
\end{align*}
In order to bound the right-hand side, 
we note that, since $F(\cdot,u)$ is $2\pi$-periodic and $F_{\rm av}$ is its average, $\mathcal F(\cdot,u)$ is also $2\pi$-periodic. Hence, it is readily seen that this function satisfies the following properties,
\begin{align*}\mbox{if}\quad \|u\|_{\Sigma^2}\le R,\quad &\mbox{then}\quad \sup_{\theta\in \RR}\|\mathcal F(\theta,u)\|_{\Sigma^2}\le CR^{2\sigma+1},\\
\mbox{if}\quad \|u\|_{\Sigma^2}+\|v\|_{L^2}\le R,\quad &\mbox{then}\quad \sup_{\theta\in \RR}\|D_u\mathcal F(\theta,u)\cdot v\|_{L^2}\le CR^{2\sigma+1}.
\end{align*}
Since $\phi\in L^\infty([0,T],\Sigma^2)$, $\pa_t\phi\in L^\infty([0,T],L^2(\R^3))$ and since $\|\calH_z u\|_{L^2}\lesssim \|u\|_{\Sigma^2}$ (by our assumptions \eqref{Vhyp} on $V$), we can finally bound
$$
\|A_2\|_{L^2}\lesssim \eps^2.
$$
(Here, we have also used the fact that the time interval has at most length $T\lesssim 1$.)

In summary, we have proved that, for all $t\in[0,\min(T,T^\eps)]$,
$$\|u^\eps(t, \cdot)\|_{L^2}\lesssim \eps^2+\int_0^t\|u^\eps(s, \cdot)\|_{L^2}ds.$$
Thus, Gronwall's lemma yields that, for all $t\in[0,\min(T,T^\eps)]$,
\begin{equation}
\label{estieps}
\|\psi^\eps(t)-e^{-it\calH/\eps^2}\phi(t)\|_{L^2} = \|\phi^\eps(t)-\phi(t)\|_{L^2} = \| u^\eps (t) \|_{L^2} \lesssim \eps^2.
\end{equation}
We consequently deduce from \eqref{defM}, a Gagliardo--Nirenberg inequality, \eqref{estieps}, Lemma \ref{bounded} and Lemma \ref{lem:apriori} that, for all $t\in[0,\min(T,T^\eps)]$,
\begin{align*}
\|\psi^\eps(t)\|_{L^\infty}&\le M+\|\psi^\eps(t)-e^{-it\calH/\eps^2}\phi(t)\|_{L^\infty}\\
&\le M+C\|\psi^\eps(t)-e^{-it\calH/\eps^2}\phi(t)\|_{L^2}^{1/4} \ \|\psi^\eps(t)-e^{-it\calH/\eps^2}\phi(t)\|_{H^2}^{3/4}\\
&\le M+C'\eps^{1/2}\left(\|\psi^\eps(t)\|_{\Sigma^2}'+\|\phi(t)\|_{\Sigma^2}'\right)^{3/4}\\
&\le M+C''\eps^{1/2}.
\end{align*}
Here the constant $C''$ depends on $T$, but not on $\eps$. (Note that the factor $e^{C_M t}$ in Lemma \ref{lem:apriori} can be bounded by $e^{C_M T}$ for $t\le\min(T,T^\eps)$.)

Hence, for $\eps< \eps_T:=(M/C'')^2$, we have 
\begin{equation}
\label{bornLinf}
\forall t\le \min(T,T^\eps),\quad \|{ \psi}^\eps(t)\|_{L^\infty}< 2M.
\end{equation}
We claim that this implies $T^\eps\ge T$. In fact, this is trivial when $T^\eps=\infty$ and otherwise we deduce from \eqref{ifthen} that $\|\psi^\eps(T^\eps)\|=2M$, which contradicts \eqref{bornLinf}.

Consequently, \eqref{estieps} is valid on $[0,T]$ which proves the inequality in item (ii) of Theorem \ref{thmmain}. Finally the claimed uniform boundedness in $L^\infty((0,T),\Sigma^2)$ with respect to $\eps\in(0,\eps_T]$ follows from Lemma \ref{lem:apriori} and the fact that $T^\eps\ge T$. This completes the proof.
\end{proof}

%%%%%%%%%%%%%%%%%%%%%%%%%%%%%%%

\section{Dynamics within a single Landau level} \label{sec:land}

In this section, we first prove that for initial data $\psi_0$ concentrated within a given Landau level, the effective dynamics is given by item (iii) of Theorem \ref{thmmain}. 
To this end, we denote by $P_n=P_n^2$ the spectral projection onto the $n$-th eigenspace of $\mathcal H$, and assume that initially $\psi_0 = P_n \psi_0$. 

\begin{proof}[Proof of Theorem \ref{thmmain}(iii)]
We let $P_n^\bot=1-P_n$ and $w:=P_n^\bot \phi \in C([0,T_{\rm max}),\Sigma^2)$, and recall that $w(0, \bx) = 0$, since $\phi(0,\bx) = P_n\psi_0(\bx)$. 
It suffices to show that $w(t, \bx)=0$, for all $t\in [0, T_{\rm max})$. 
To this end, we first note that the equation satisfied by $w$ is
\begin{align*}
i\pa_t w&=\mathcal H_z w+   \frac{\lambda }{2\pi}\int_0^{2\pi}P_n^\bot F(\theta,\phi)d\theta,
\end{align*}
where $F$ is defined by \eqref{eq:F}. We rewrite this equation as
\begin{align*}
i\pa_t w&=\mathcal H_z  w +  \frac{ \lambda}{2\pi}\int_0^{2\pi}P_n^\bot F(\theta,P_n\phi) \, d\theta \\
&\quad +   \frac{ \lambda}{2\pi}\int_0^{2\pi}P_n^\bot \left(F(\theta,P_n\phi+w)-F(\theta,P_n\phi)\right) \, d\theta,
\end{align*}
and also note that
\begin{align*}
F(\theta,u)&=e^{i\theta \calH}\left(\left|e^{-i\theta \calH}u\right|^{2\sigma}e^{-i\theta \calH}u\right)\\
&=e^{i\theta (\calH-n-1/2)}\left(\left|e^{-i\theta (\calH-n-1/2)}u\right|^{2\sigma}e^{-i\theta (\calH-n-1/2)}u\right).
\end{align*}
Using this, we see that
\begin{align*}\int_0^{2\pi} P_n^\bot F(\theta,P_n\phi)d\theta&=\int_0^{2\pi}P_n^\bot e^{i\theta (\calH-n-1/2)}\left(\left|P_n\phi\right|^{2\sigma}P_n\phi\right)\,d\theta\\
&=\sum_{m\neq n}\left(\int_0^{2\pi}e^{i\theta (m-n)}d\theta\right) P_m\left(\left|P_n\phi\right|^{2\sigma}P_n\phi\right)\\
&=0,
\end{align*}
due to the $2\pi$-periodicity of $e^{i\theta (m-n)}$. 
By writing the Duhamel formulation of the equation $w$, we therefore obtain
$$w(t)=-i\lambda \int_0^te^{-i(t-s)\mathcal H_z}\frac{1}{2\pi}\int_0^{2\pi}P_n^\bot\left(F(\theta,P_n\phi(s)+w(s))-F(\theta,P_n\phi(s))\right) d\theta ds,$$
and therefore
\begin{align*}
& \|w(t)\|_{L^2}\le \, \frac{|\lambda|}{2\pi} \int_0^t\int_0^{2\pi}\left\| F(\theta,P_n\phi(s)+w(s))-F(\theta,P_n\phi(s)) \right\|_{L^2}\, d\theta \, ds \\
& \qquad \le C\int_0^t \sup_\theta \left( \| e^{-i\theta\calH} (P_n\phi(s)+w(s)) \|_{L^\infty}^{2\sigma} + \| e^{-i\theta\calH} P_n\phi(s) \|_{L^\infty}^{2\sigma} \right) \|w(s)\|_{L^2} \, ds.
\end{align*}
Since $\Sigma^2 \hookrightarrow   H^2(\R^3) \hookrightarrow L^\infty (\R^3)$, we can use Lemma \ref{bounded} to obtain
$$
\| e^{-i\theta\calH} (P_n\phi(s)+w(s)) \|_{L^\infty} = \| e^{-i\theta\calH}\phi(s) \|_{L^\infty} \lesssim \| e^{-i\theta\calH} \phi(s) \|_{\Sigma^2}' =  \|\phi(s)\|_{\Sigma^2}'.
$$
According to item (i) in Theorem \ref{thmmain} this is bounded on any interval $[0,T]$ with $T<T_{\rm max}$. Similarly, we bound
$$
\| e^{-i\theta\calH} P_n\phi(s) \|_{L^\infty} \lesssim \| e^{-i\theta\calH} P_n\phi (s)\|_{\Sigma^2}' \le \| P_n\phi(s) \|_{\Sigma^2}'.
$$
We now obtain the same bound as before if we use that
$$
\| P_n\phi(s) \|_{\Sigma^2}' \le \| \phi(s) \|_{\Sigma^2}'.
$$
The proof of this inequality is similar to the proof of Lemma \ref{bounded}. In fact, the inequality is obvious for all terms in the definition of the norm $\|\cdot\|_{\Sigma^2}'$ except for the one involving $\calH_0$. As observed in Lemma \ref{bounded}, however, $\calH_0$ commutes with $\calH$ and therefore also with $P_n$. Thus, $\|\calH_0 P_n u\|_{L^2}= \|P_n \calH_0 u\|_{L^2}\le \|\calH_0 u\|_{L^2}$, as claimed.

To summarize, we have shown that for any $T<T_{\rm max}$ there is a $C$ such that for all $t\in [0,T]$,
$$
\|w(t)\|_{L^2}\le C\int_0^t\|w(s)\|_{L^2} \, ds.
$$
By Gronwall's lemma, we deduce that $w\equiv 0$ on $[0,T]$ for any $T<T_{\rm max}$. This completes the proof of the theorem.
\end{proof}

\bibliographystyle{amsplain}

\begin{thebibliography}{99}

\bibitem{aftalion-blanc-nier}  A. Aftalion, X. Blanc, F. Nier, {\it Lowest Landau level functional and Bargmann spaces for Bose-Einstein condensates}. J. Funct. Anal. {\bf 241} (2006), no. 2, 661--702.

\bibitem{delebecque2}  N. Ben Abdallah, F. Castella, F. Delebecque-Fendt, F. M\'ehats, {\it The strongly confined Schr\"odinger-Poisson system for the transport of electrons in a nanowire}. 
SIAM J. Appl. Math. {\bf 69} (2009), no. 4, 1162--1173.

\bibitem{bcm} N.~Ben~Abdallah, F.~Castella, and F.~M{\'e}hats, {\it Time averaging for the strongly confined nonlinear Schr\"odinger equation, using almost-periodicity}, J. Differential Equ. {\bf 245} (2008), pp.~154--200.

\bibitem{ben2005nonlinear} { N.~Ben~Abdallah, F.~M{\'e}hats, C.~Schmeiser, and R.~M. Weish{\"a}upl}, {\it The nonlinear Schr{\"o}dinger equation with a strongly anisotropic harmonic potential}, 
SIAM J. Math. Anal. {\bf 37} (2005), pp.~189--199.

\bibitem{caze} T.\ Cazenave, \emph{Semlinear Schr\"odinger equations}. Courant Lecture Notes in Mathematics, vol.\ 10, New York University Courant Institute of Mathematical Sciences, New York (2003).

\bibitem{delebecque} F. Delebecque-Fendt and F. M\'ehats, {\it An effective mass theorem for the bidimensional electron gas in a strong magnetic field}, Comm. Math. Phys. {\bf 292} (2009), no. 3, 829--870.

\bibitem{Fu} D. Fujiwara, {\it A construction of fundamental solution for the Schr\"odinger equation}. J. Analyse Math. {\bf 35} (1979), 41--96.

\bibitem{Fu1} D. Fujiwara, {\it Remarks on convergence of the Feynman path integrals}. Duke Math. J. {\bf 47} (1980), 559--600.

\bibitem{GeTh} P. Germain and L. Thomann, {\it On the high frequency limit of the LLL equation}, Quart. Appl. Math. {\bf 74} (2016), no. 4, 633---641.

\bibitem{HS} C. Hainzl and B. Schlein, {\it Dynamics of Bose-Einstein condensates of fermion pairs in the low density limit of BCS theory.} J. Funct. Anal. {\bf 265} (2013), no. 3, 399--423.

\bibitem{He} B. Helffer, {\it Th\'eorie spectrale pour des op\'erateurs globalement elliptiques}. (French). Ast\'erisque, 112. Soci\'et\'e Math\'ematique de France, Paris, 1984.

\bibitem{IMS} A. Iwatsuka, T. Mine, S.-I. Shimada, {\it Norm resolvent convergence to Schr\"odinger operators with infinitesimally thin toroidal magnetic fields}, in 
{\it Sectral and Scattering Theory for Quantum Magnetic System}, Contemp. Math. {\bf 500} (2009), 139--151.

\bibitem{LL} L.D. Landau and E. M. Lifschitz, {\it Quantum Mechanics: Non-relativistic Theory}. Course of Theoretical Physics. Vol. 3 (3rd ed.), London, Pergamon Press, 1977.

\bibitem{MeSp} F. M\'ehats and C. Sparber, \emph{Dimension reduction for rotating Bose-Einstein condensates with anisotropic confinement.} Discrete Contin. Dyn. Syst. {\bf 36} (2016), no. 9, 
5097--5118.

\bibitem{Ni} F. Nier, {\it Bose-Einstein condensates in the lowest Landau level: Hamiltonian dynamics}. Rev. Math. Phys. {\bf 19} (2007), no. 1, 101--130.

\bibitem{Pa} A. Pazy, {\it Semigroups of linear operators and applications to partial differential equations.} Applied Mathematical Sciences vol. 44, Springer-Verlag, New York, 1983.

\bibitem{NaSh} Y. Nakamura and A. Shimomura, {\it Local well-posedness and smoothing effects of strong solutions for nonlinear Schr\"odinger equations with potentials and magnetic fields.} 
Hokkaido Math. J. {\bf 34} (2005), 37--63.

\bibitem{Ya} K. Yajima, {\it Schr\"odinger evolution equations with magnetic fields}. J. Analyse Math. {\bf 56} (1991), 29--76.


\end{thebibliography}

\end{document}